

\documentclass{article}
\usepackage{amsfonts}

\begin{document}

\newtheorem{theoremm}{Theorem}
\newtheorem{conditionss}{Condition}
\newtheorem{thesiss}[theoremm]{Thesis}
\newtheorem{definitionn}[theoremm]{Definition}
\newtheorem{lemmaa}[theoremm]{Lemma}
\newtheorem{notationn}[theoremm]{Notation}
\newtheorem{propositionn}[theoremm]{Proposition}
\newtheorem{conventionn}[theoremm]{Convention}
\newtheorem{examplee}[theoremm]{Example}
\newtheorem{remarkk}[theoremm]{Remark}
\newtheorem{factt}[theoremm]{Fact}
\newtheorem{exercisee}[theoremm]{Exercise}
\newtheorem{questionn}[theoremm]{Open Problem}
\newtheorem{conjecturee}[theoremm]{Conjecture}

\newenvironment{exercise}{\begin{exercisee} \em}{ \end{exercisee}}
\newenvironment{definition}{\begin{definitionn} \em}{ \end{definitionn}}
\newenvironment{theorem}{\begin{theoremm}}{\end{theoremm}}
\newenvironment{lemma}{\begin{lemmaa}}{\end{lemmaa}}
\newenvironment{proposition}{\begin{propositionn} }{\end{propositionn}}
\newenvironment{convention}{\begin{conventionn} \em}{\end{conventionn}}
\newenvironment{remark}{\begin{remarkk} \em}{\end{remarkk}}
\newenvironment{proof}{ {\bf Proof.} }{\  \rule{2.5mm}{2.5mm} \vspace{.2in} }
\newenvironment{idea}{ {\bf Idea.} }{\  \rule{1.5mm}{1.5mm} \vspace{.15in} }

\newenvironment{fact}{\begin{factt}}{\end{factt}}
\newenvironment{notation}{\begin{notationn} \em}{\end{notationn}}
\newenvironment{conditions}{\begin{conditionss} \em}{\end{conditionss}}
\newenvironment{question}{\begin{questionn}}{\end{questionn}}
\newenvironment{conjecture}{\begin{conjecturee}}{\end{conjecturee}}

\title{Soundness and completeness of the cirquent calculus system CL6 for computability logic$^*$\footnotetext[1]{This work was supported by the NNSF (60974082) of China.}}

\author{{\em Wenyan Xu \ and \ Sanyang Liu}\\
{\em\small Department of Mathematics, Xidian University, Xi'an, 710071, PR China}}
\date{}
\maketitle
\begin{abstract}
Computability logic is a formal theory of computability. The earlier
article ``Introduction to cirquent calculus and abstract resource
semantics" by Japaridze proved soundness and completeness for the basic fragment
{\bf CL5} of computability logic. The present article extends that
result to the more expressive cirquent calculus system {\bf CL6},
which is a conservative extension of both {\bf CL5} and classical propositional logic.
\end{abstract}

\noindent {\bf\small Keywords}: Cirquent calculus; Computability logic.

\section{Introduction}

Computability logic(CoL), introduced by G. Japaridze
\cite{Key3}-\cite{Key5}, is a semantical and mathematical platform
for redeveloping logic as a formal theory of computability. Formulas in CoL represent interactive computational problems,
understood as games between a machine and its environment (symbolically named as $\top$ and $\bot$, respectively); logical operators stand for operations on such problems; ``truth" of a problem/game means existence of an algorithmic solution, i.e. $\top$'s effective winning strategy; and validity of a logical formula is understood as such truth under every particular interpretation of atoms. The approach induces a rich collection of (old or new) logical operators. Among those, relevant to this paper are $\neg$ (negation), $\vee$ (parallel disjunction) and $\wedge$ (parallel conjunction). Intuitively, $\neg$ is a role switch operator: $\neg A$ is the game $A$ with the roles of $\top$ and $\bot$ interchanged ($\top$'s legal moves and wins become those of $\bot$, and vice versa). Both $A\wedge B$ and $A\vee B$ are games playing which means playing the two components $A$ and $B$ simultaneously (in parallel). In $A\wedge B$, $\top$ is the winner if it wins in both components, while in $A\vee B$ winning in just one component is sufficient. The symbols $\top$ and $\bot$, together with denoting the two players, are also used to denote two special (the simplest) sorts of games. Namely, $\top$ is a moveless (``elementary") game automatically won by the player $\top$, and $\bot$ is a moveless game automatically won by $\bot$.

Cirquent calculus is a
refinement of sequent calculus. Unlike the more traditional proof theories that manipulate tree-like objects (formulas, sequents, hypersequents, etc.), cirquent calculus deals with graph-style structures termed {\em cirquents}, with its main characteristic feature thus being allowing to explicitly account for {\em sharing} subcomponents between different subcomponents. The approach was introduced by Japaridze
\cite{Key1} as a new deductive tool for CoL and was developed later in \cite{Key10}-\cite{Key12}. The paper \cite{Key1} constructed a cirquent calculus system {\bf CL5} for the basic
$(\neg,\wedge,\vee)$-fragment of CoL, and proved its soundness and completeness with respect to the semantics of CoL.

The atoms of {\bf CL5} represent computational problems in general, and are said to be {\em general atoms}. The so called {\em elementary atoms},
representing computational problems of zero degree of interactivity (such as the earlier-mentioned games $\top$ and $\bot$)
and studied in other pieces of literature on CoL, are not among them.
Thus, {\bf CL5} only describes valid
computability principles for general problems. This is a significant
limitation of expressive power. For example, the problem
$A\rightarrow A{\wedge}A$ is not valid in CoL when $A$ is a general atom, but becomes valid (as any classical tautology for that matter) when $A$ is elementary. So the language of {\bf CL5} naturally calls for an extension.

Japaridze \cite{Key1} claimed without a proof that the soundness
and completeness result for {\bf CL5} could be extended to the more expressive
cirquent calculus system {\bf CL6} (reproduced later), which is a conservative
extension of both {\bf CL5} and classical propositional logic.
This article is devoted to a
soundness and completeness proof for system {\bf CL6}, thus contributing
to the task of extending the cirquent-calculus approach
so as to accommodate incrementally expressive fragments of CoL.

\section{Preliminaries}
This paper primarily targets readers already familiar with Japaridze \cite{Key1}, and can essentially be treated as a technical appendix to the latter. However, in order to make it reasonably self-contained, in this section we reproduce the basic concepts from \cite{Key1} on which the later parts of the paper will rely. An interested reader may consult \cite{Key1} for additional explanations, illustrations and examples.

The language of {\bf CL6} is more expressive than that of {\bf CL5} in that,
 along with
the old atoms of {\bf CL5} called {\bf general}, it has an
additional sort of atoms called {\bf elementary}, including
{\bf non-logical} elementary atoms and {\bf logical} atoms $\top$ and $\bot$. On the other hand, all general atoms are non-logical. We
use the uppercase letters $P, Q, R, S$ as metavariables
for general atoms, and the lowercase $p, q, r, s$ as
metavariables for non-logical elementary atoms.
A {\bf CL6-formula} is built from atoms in the standard way
using the connectives $\neg$,$\vee$,$\wedge$, with $F\rightarrow G$ understood as an
abbreviation for $\neg F\vee G$ and $\neg$ limited only to non-logical atoms, where $\neg\neg F$ is understood as $F$, $\neg(F\wedge G)$ as $\neg F\vee\neg G$, $\neg(F\vee G)$ as $\neg F\wedge\neg G$, $\neg\top$ as $\bot$, and $\neg\bot$ as $\top$.
An atom $P$ (resp. $p$) and its {\bf negation} $\neg P$ (resp. $\neg p$) is called a {\bf literal},
and the two literals are said to be {\bf opposite}. A {\bf CL6}-formula is said to be {\bf elementary} iff it does not contain general
atoms. Throughout the rest of this paper, unless otherwise specified, by an
``atom'' or a ``formula'' we mean one of the language of {\bf CL6}.

Where $k\geq 0$, a $k-$ary {\bf pool} is a sequence
$\langle F_1, F_2, \ldots, F_k\rangle$ of $k$ formulas.  Since we may have $F_i=F_j$ for some $i\neq j$ in such a sequence, we use the term {\bf oformula} to refer to a formula together with a particular occurrence of it in the pool.   For example, the pool $\langle E,F,G,E\rangle $ has three formulas but four oformulas.
Similarly, the terms ``{\bf oliteral}",``{\bf oatom}", etc. will be used
in this paper to refer to the corresponding entities together with particular occurrences.
A $k-$ary {\bf structure} is a finite sequence {\bf St}$=\langle \Gamma_1,\ldots,\Gamma_m\rangle $, where $m\geq 0$ and each $\Gamma_i$, said to be a {\bf group} of {\bf St}, is a subset of $\{1,\ldots,k\}$. Again, to differentiate
between a group as  such and a particular occurrence of a group in the structure, we use the term {\bf ogroup} for the latter. For example, the structure $\langle\{2,3\},\{2,3\},\{1,4\},\emptyset\rangle$ has three groups but four ogroups.

A $k$-ary ($k\geq 0$) {\bf cirquent} is a pair $C=({\bf St}^{C},{\bf Pl}^{C})$, where ${\bf St}^{C}$, called the {\bf structure} of $C$, is a $k$-ary structure, and ${\bf Pl}^{C}$, called the {\bf pool} of $C$, is a $k$-ary pool. An ogroup of such a $C$ will mean an ogroup of ${St}^{C}$, and an oformula of $C$ will mean an oformula of ${\bf Pl}^{C}$. Usually, we understand the groups of a cirquent as sets of its oformulas rather than sets of the corresponding ordinal numbers. Thus, if ${\bf Pl}^{C}=\langle E,F,G,E\rangle$ and $\Gamma=\{2,4\}$, we would think of $\Gamma$ simply as the set $\{F,E\}$, and say that $\Gamma$ {\bf contains} $F$ and $E$. When both the pool and the structure of a cirquent $C$ are empty, i.e. $C=(\langle\rangle,\langle\rangle)$, we call it the {\bf empty cirquent}.

Rather than writing cirquents as ordered tuples in the above-described style, we prefer to represent them through (and identify them with) {\bf diagrams}. Below is such a representation for the cirquent whose pool is $\langle E,F,G,H\rangle$ and whose structure is $\langle \{1,2\},\{2\},\{3,4\}\rangle$.

\begin{center}
\begin{picture}(80,60)(0,7)
\put(-10,50){\line(1,0){90}}\put(-10,38){$E\ \ \ \ \ \ F\ \ \ \ \  G\ \ \ \ \ \ H$}
\put(-8,35){\line(1,-1){14}}\put(20,35){\line(-1,-1){14}}\put(4,18){$\bullet$}
\put(20,35){\line(1,-1){14}}\put(46,35){\line(1,-1){14}}\put(32,18){$\bullet$}
\put(74,35){\line(-1,-1){14}}\put(58,18){$\bullet$}
\end{picture}
\end{center}
The top level of a diagram thus indicates the oformulas of the cirquent, and the bottom level gives its ogroups. An ogroup $\Gamma$ is represented by a $\bullet$, and the lines connecting $\Gamma$ with oformulas, called {\bf arcs}, are pointing to the oformulas that $\Gamma$ contains. Finally, we put a horizontal line at the top of the diagram to indicate that this is one cirquent rather than two or more cirquents put together.

A {\bf model} is a function $M$ that assigns a truth value
--- {\itshape true} (1) or {\itshape false} (0) --- to each atom, with $\top$ being
always assigned {\itshape true} and $\bot$ {\itshape false}, and extends
to compound formulas in the standard classical way. Let $M$ be a model, and
$C$ a cirquent. We say that a group $\Gamma$ of $C$ is {\bf true} in $M$
iff at least one of its oformulas is so. And $C$ is {\bf true} in $M$ if
every group of $C$ is so. Otherwise, $C$ is {\bf false}. Finally,
$C$ or a group $\Gamma$ of it is a {\bf tautology} iff it is true in every model.

A {\bf substitution} is a function $\sigma$ that sends every general atom $P$ to
some formula $\sigma(P)$, and sends every elementary atom to itself.
If, (for every general atom $P$), such a $\sigma(P)$ is an atom, then $\sigma$ is said to be
an {\bf atomic-level substitution}.

Let $A$ and $B$ be cirquents. We say that $B$ is an {\bf instance} of $A$ iff
$B = \sigma(A)$ for some substitution $\sigma$, where $\sigma(A)$ is the result of replacing
in all oformulas of $A$ every (general or elementary) atom $\alpha$ by $\sigma(\alpha)$; and $B$ is an {\bf atomic-level instance}
of $A$ iff $B = \sigma(A)$ for some atomic-level substitution $\sigma$.

A cirquent is said to be {\bf binary} iff no general atom has more than two occurrences
in it. A binary cirquent is said to be {\bf normal} iff, whenever it has two occurrences of
a general atom, one occurrence is negative and the other is positive.
A {\bf binary tautology} (resp. {\bf normal binary tautology}) is a binary (resp. normal binary)
cirquent that is a tautology.

The set of rules of
{\bf CL6} is obtained from that of {\bf CL5} by adding to it $\top$
as an additional axiom, plus the rule of contraction limited only to
elementary formulas. Below we reproduce those rules from \cite{Key1}, followed by illustrations.

{\bf Axioms (A):}\ \ \ Axioms are ``rules" with no premises. There are three sorts of axioms in {\bf CL6}.
The first one is the empty cirquent. The second one is any cirquent that has exactly two oformulas $F$ and $\neg F$, for
some arbitrary formula $F$, and an ogroup that contains $F$ and $\neg F$. In other words, this is the cirquent $(\langle \{1,2\}\rangle,\langle F,\neg F\rangle)$.
The third one is a cirquent that has exactly one oformula $\top$ and one ogroup that contains $\top$, i.e. the cirquent $(\langle\{1\}\rangle,\langle \top\rangle)$.

{\bf Mix (M):}\ \ \ According to this rule, the conclusion
can be obtained by simply putting any two cirquents (premises) together, thus creating one cirquent out of two.

{\bf Exchange (E):}\ \ \ This rule comes in two versions: {\bf oformula exchange} and {\bf ogroup exchange}.
The conclusion of oformula exchange is obtained by interchanging in the premise two adjacent oformulas $E$ and $F$, and
redirecting to $E$ (resp. $F$) all arcs that were originally pointing to $E$ (resp. $F$). Ogroup exchange is the same, with the only difference that the objects interchanged are ogroups.

{\bf Weakening (W):}\ \ \ This rule also comes in two versions: {\bf ogroup weakening} and {\bf pool weakening}.
A conclusion of ogroup weakening is obtained by adding in the premise a new arc between an existing
ogroup and an existing oformula. As for pool weakening, a conclusion is obtained through
inserting a new oformula anywhere in the pool of the premise.

{\bf Duplication (D):}\ \ \ A conclusion of this rule is obtained by replacing in the premise some ogroup $\Gamma$ by two adjacent ogroups that, as groups, are identical with $\Gamma$.

{\bf Contraction (C):}\ \ \ According to this rule, if a cirquent (a premise) has two adjacent elementary oformulas $F$ (the first), $F$ (the second) that are identical,
then a conclusion can be obtained by merging $F$,$F$ into $F$ and redirecting to the latter all arcs that were originally pointing to the first or the second $F$.

{\bf $\vee-$introduction ($\vee$):}\ \ \ For the convenience of description, we explain this rule in the bottom-up view.
According to this rule, if a cirquent (the conclusion) has an oformula $E\vee F$ that is contained by at least one ogroup, then the premise can be obtained by splitting the original $E\vee F$ into two adjacent oformulas $E$ and $F$, and redirecting to {\em both} $E$ and $F$ all arcs that were originally pointing to $A\vee B$.

{\bf $\wedge-$introduction ($\wedge$):}\ \ \ This rule, again, is more conveniently described in the bottom-up view. According to this rule, if a cirquent (the conclusion) has an oformula $E\wedge F$ that is contained by at least one ogroup, then the premise
can be obtained by splitting the original $E\wedge F$ into two adjacent oformulas $E$ and $F$, and splitting every ogroup $\Gamma$ that originally contained $E\wedge F$ into two adjacent ogroups $\Gamma^{E}$ and $\Gamma^{F}$, where $\Gamma^{E}$ contains $E$ (but not $F$), and $\Gamma^{F}$ contains $F$ (but not $E$), with all other ($\neq E\wedge F$) oformulas of $\Gamma$ contained by both $\Gamma^{E}$ and $\Gamma^{F}$.

Below we provide illustrations for all rules, in each case an abbreviated name of the rule standing next to the horizontal line separating the premises from the conclusions. Our illustrations for the axioms (the ``{\bf A}" labeled rules) are specific cirquents or schemate of such; our illustrations for all other rules are merely examples chosen arbitrarily. Unfortunately, no systematic ways for schematically representing cirquent calculus rules have been elaborated so far. This explains why we appeal to examples instead.

\begin{center}
\begin{picture}(80,60)(0,7)\footnotesize
\put(-90,50){\line(1,0){35}}\put(-52,47){\bf A}
\put(15,50){\line(1,0){45}}\put(19,40){$F\ \ \ \ \ \ \ \neg F$}\put(63,47){\bf A}
\put(22,38){\line(1,-1){15}}\put(52,38){\line(-1,-1){15}}\put(35.5,21){$\bullet$}
\put(130,50){\line(1,0){35}}\put(146,40){$\top$}\put(168,47){\bf A}
\put(150,38){\line(0,-1){15}}\put(148,21){$\bullet$}
\end{picture}
\end{center}

\begin{center}
\begin{picture}(80,80)(0,20)\footnotesize
\put(15,0){\begin{picture}(80,80)\put(-117,80){\line(1,0){18}}\put(-78,80){\line(1,0){25}}
\put(-112,70){$E$}\put(-109,68){\line(0,-1){11}}\put(-111,54){$\bullet$}
\put(-119,52){\line(1,0){67}}\put(-49,49){\bf M}
\put(-80,70){$F\ \ \ \ \ G$}\put(-78,68){\line(1,-1){11}}\put(-55,68){\line(-1,-1){11}}\put(-68,54){$\bullet$}
\put(-108,42){$E$}\put(-105,40){\line(0,-1){11}}\put(-107,26){$\bullet$}
\put(-90,42){$F\ \ \ \ \ G$}\put(-88,40){\line(1,-1){11}}\put(-65,40){\line(-1,-1){11}}\put(-78.5,26){$\bullet$}\end{picture}}

\put(5,85){\it oformula  exchange}
\put(15,80){\line(1,0){45}}\put(15,70){$E\ \ \ \ \ F\ \ \ \ G$}
\put(18,68){\line(0,-1){11}}\put(16,54){$\bullet$}
\put(18,68){\line(3,-2){18}}\put(37,68){\line(0,-1){11}}\put(35,54){$\bullet$}
\put(37,68){\line(3,-2){18}}\put(56,68){\line(0,-1){11}}\put(54,54){$\bullet$}
\put(15,52){\line(1,0){45}}\put(63,49){\bf E}
\put(15,42){$F\ \ \ \ \ E\ \ \ \ G$}
\put(16,26){$\bullet$}\put(35,26){$\bullet$}\put(54,26){$\bullet$}
\put(37,40){\line(-3,-2){18}}\put(37,40){\line(0,-1){11}}
\put(18,40){\line(3,-2){18}}\put(56,40){\line(0,-1){11}}\put(18,40){\line(3,-1){38}}

\put(0,0){\begin{picture}(80,80)\put(117,85){\it ogroup  exchange}
\put(125,80){\line(1,0){45}}\put(125,70){$E\ \ \ \ \ F\ \ \ \ G$}
\put(128,68){\line(0,-1){11}}\put(126,54){$\bullet$}
\put(128,68){\line(3,-2){18}}\put(147,68){\line(0,-1){11}}\put(145,54){$\bullet$}
\put(147,68){\line(3,-2){18}}\put(166,68){\line(0,-1){11}}\put(164,54){$\bullet$}
\put(125,52){\line(1,0){45}}\put(173,49){\bf E}
\put(125,42){$E\ \ \ \ \ F\ \ \ \ G$}
\put(126,26){$\bullet$}\put(145,26){$\bullet$}\put(164,26){$\bullet$}
\put(128,40){\line(0,-1){11}}\put(147,40){\line(-3,-2){18}}
\put(128,40){\line(3,-2){18}}\put(147,40){\line(3,-2){18}}\put(166,40){\line(0,-1){11}}\end{picture}}
\end{picture}
\end{center}

\begin{center}
\begin{picture}(80,80)(0,30)\footnotesize
\put(-105,85){\it ogroup weakening}
\put(0,0){\begin{picture}(80,80)\put(-95,80){\line(1,0){45}}
\put(-95,70){$F\ \ \ \  G\ \ \ \ \ H$}\put(-92,68){\line(0,-1){11}}
\put(-94,54){$\bullet$}\put(-75,54){$\bullet$}
\put(-73,68){\line(-3,-2){18}}\put(-73,68){\line(0,-1){11}}\put(-53,68){\line(0,-1){11}}\put(-55,54){$\bullet$}
\put(-95,52){\line(1,0){45}}\put(-47,49){\bf W}
\put(-95,42){$F\ \ \ \ G\ \ \ \ \ H$}\put(-92,40){\line(0,-1){12}}
\put(-94,25){$\bullet$}\put(-75,25){$\bullet$}
\put(-73,40){\line(-3,-2){18}}\put(-73,40){\line(0,-1){12}}\put(-53,40){\line(0,-1){12}}\put(-55,25){$\bullet$}
\put(-73,40){\line(3,-2){19}}\end{picture}}

\put(12,85){\it pool weakening}
\put(110,0){\begin{picture}(80,80)\put(-95,80){\line(1,0){45}}
\put(-93,70){$F\ \ \ \ \ \ \ \ \ \ H$}
\put(-90,68){\line(0,-1){11}}
\put(-92,54){$\bullet$}\put(-90,68){\line(3,-1){37}}
\put(-54,68){\line(0,-1){11}}\put(-56,54){$\bullet$}
\put(-95,52){\line(1,0){45}}\put(-47,49){\bf W}
\put(-93,42){$F\ \ \ \ G\ \ \ \  H$}
\put(-90,40){\line(0,-1){11}}
\put(-92,26){$\bullet$}\put(-90,40){\line(3,-1){37}}
\put(-54,40){\line(0,-1){11}}\put(-56,26){$\bullet$}
\end{picture}}

\put(0,0){\begin{picture}(80,80)
\put(125,80){\line(1,0){45}}\put(125,70){$E\ \ \ \ \ F\ \ \ \ G$}
\put(128,68){\line(3,-2){18}}\put(147,68){\line(0,-1){11}}\put(145,54){$\bullet$}
\put(166,68){\line(0,-1){11}}\put(164,54){$\bullet$}
\put(125,52){\line(1,0){45}}\put(173,49){\bf D}
\put(125,42){$E\ \ \ \ \ F\ \ \ \ G$}
\put(126,26){$\bullet$}\put(145,26){$\bullet$}\put(164,26){$\bullet$}
\put(128,40){\line(0,-1){11}}\put(147,40){\line(-3,-2){18}}
\put(128,40){\line(3,-2){18}}\put(147,40){\line(0,-1){11}}\put(166,40){\line(0,-1){11}}\end{picture}}
\end{picture}
\end{center}

\begin{center}
\begin{picture}(80,120)(0,10)\footnotesize

\put(0,0){\begin{picture}(80,80)
\put(-100,95){\it $F$ required to be}
\put(-94,85){\it elementary}
\put(-95,80){\line(1,0){45}}
\put(-95,70){$E\ \ \ \ \ F\ \ \ \ F$}\put(-92,68){\line(0,-1){11}}
\put(-94,54){$\bullet$}
\put(-73,68){\line(-3,-2){18}}\put(-73,68){\line(3,-2){19}}\put(-54,68){\line(0,-1){11}}\put(-56,54){$\bullet$}
\put(-95,52){\line(1,0){45}}\put(-47,49){\bf C}
\put(-93,42){$E\ \ \ \ \ \ \ \ \ \ F$}\put(-90,40){\line(0,-1){12}}
\put(-92,25){$\bullet$}\put(-54,40){\line(0,-1){12}}\put(-56,25){$\bullet$}
\put(-54,40){\line(-3,-1){36}}\end{picture}}

\put(110,0){\begin{picture}(80,80)\put(-95,80){\line(1,0){45}}
\put(-95,70){$H\ \  E \ \  F\ \  H$}
\put(-91,68){\line(0,-1){11}}
\put(-93,54){$\bullet$}\put(-80,68){\line(-1,-1){11}}\put(-67,68){\line(-2,-1){23}}
\put(-80,68){\line(2,-1){24}}\put(-67,68){\line(1,-1){12}}
\put(-55,68){\line(0,-1){11}}\put(-57,54){$\bullet$}
\put(-97,52){\line(1,0){49}}\put(-46,49){\bf $\vee$}
\put(-97,42){$H \ \ E\vee F\ \ H$}
\put(-93,40){\line(0,-1){11}}
\put(-95,26){$\bullet$}\put(-72,40){\line(-2,-1){21}}\put(-72,40){\line(2,-1){20}}
\put(-52,40){\line(0,-1){11}}\put(-54,26){$\bullet$}
\end{picture}}

\put(0,0){\begin{picture}(80,80)
\put(123,80){\line(1,0){49}}\put(123,70){$F\ \ \ E \ \ \ F\ \ G$}
\put(141,68){\line(-3,-2){17}}\put(125,68){\line(2,-3){8}}
\put(156,68){\line(-2,-1){22}}\put(141,68){\line(1,-2){6}}\put(168,68){\line(-2,-1){21}}
\put(156,68){\line(0,-1){11}}\put(168,68){\line(-1,-1){11}}
\put(125,68){\line(0,-1){11}}\put(123,54){$\bullet$}
\put(131,54){$\bullet$}\put(145,54){$\bullet$}\put(154,54){$\bullet$}
\put(168,68){\line(0,-1){11}}\put(166,54){$\bullet$}

\put(123,52){\line(1,0){49}}\put(173,49){\bf $\wedge$}
\put(123,42){$F \ \ E\wedge F\ \ G$}
\put(145,26){$\bullet$}\put(164,26){$\bullet$}
\put(147,40){\line(0,-1){11}}\put(166,40){\line(-3,-2){18}}
\put(126,40){\line(0,-1){11}}\put(124,26){$\bullet$}\put(147,40){\line(-2,-1){21}}
\put(166,40){\line(0,-1){11}}\end{picture}}
\end{picture}
\end{center}

The above are all eight rules of {\bf CL6}. As a warm-up exercise, the reader may try to verify that {\bf CL6} proves
$p\rightarrow p\wedge p$ but does not prove
$P\rightarrow P\wedge P$.

As an aside, the earlier mentioned system {\bf CL5} differs from {\bf CL6} in that the $\top-$axiom and the contraction rules are absent there. Also, as noted, the language of {\bf CL5} does not allow elementary atoms. In next section we will see that our proofs are carried out purely syntactically, based on the soundness and completeness of system {\bf CL2}
(introduced in Japaridze \cite{Key2}) with respect to the semantics of CoL. That is to say we do not directly use the semantics of CoL. So,
below we only explain what the language of {\bf CL2} and its rules are, without providing any formal definitions (on top of the brief informal explanations given in Section 1) of the underlying CoL semantics. If necessary, such definitions can be found in  \cite{Key5}.

The language of {\bf CL2} is more expressive than the one in which formulas  of {\bf CL6} are written because, on top of $\neg$,$\vee$,$\wedge$, it has the binary connectives $\sqcap$ and $\sqcup$, called {\it choice operators}. The {\bf CL2-formulas} are built from atoms (including general atoms and elementary atoms) in the standard way using the connectives $\neg$,$\vee$,$\wedge$,$\sqcap$,$\sqcup$.
As in the case of {\bf CL6}-formulas, the operator $\neg$ is only allowed to be applied to non-logical atoms.
A {\bf CL2}-formula is said to be {\bf elementary} iff it contains neither general atoms nor $\sqcap$,$\sqcup$.
A {\bf positive occurrence} (resp. {\bf negative occurrence}) of an atom is one that
is not (resp. is) in the scope of $\neg$.
A {\bf surface occurrence} of a subformula of a {\bf CL2}-formula is an occurrence that is not in the scope of $\sqcap$,$\sqcup$.
A {\bf general literal} is $P$ or $\neg P$, where $P$ is a general atom.
The {\bf elementarization} of a {\bf CL2}-formula $A$ is the result of replacing in $A$ every positive surface occurrence of each general literal by $\bot$, every surface occurrence of each $\sqcup-$subformula by $\bot$, and every surface occurrence of each $\sqcap-$subformula by $\top$. A {\bf CL2}-formula is said to be {\bf stable} iff its
elementarization is a tautology of classical logic.

{\bf CL2} has the following three inference rules.

{\bf Rule (a):}\ \ \ $\overrightarrow{H} \mapsto F$, where $F$ is stable and $\overrightarrow{H}$ is the smallest set of formulas such that, whenever $F$ has a surface occurrence of a subformula $G_1\sqcap G_2$, for both $i$$\in\{$1,2$\}$, $\overrightarrow{H}$ contains the result of replacing that occurrence in $F$ by $G_i$.

{\bf Rule (b):}\ \ \ $H\mapsto F$, where $H$ is the result of replacing in $F$ a surface occurrence of a subformula $G_1\sqcup G_2$ by $G_1$ or $G_2$.

{\bf Rule (c):}\ \ \ $H\mapsto F$, where $H$ is the result of replacing in $F$ two --- one positive and one negative --- surface occurrences of some general atom by a non-logical elementary atom that does not occur in $F$.

The set $\overrightarrow{H}$ of the premises of Rule {\bf (a)} may be empty, in which case the rule (its conclusion, that is) acts like an axiom. Otherwise, the system has no (other) axioms.

\section{Soundness and completeness of CL6}

In what follows, we may use names such as (AME) to refer to the subsystem of {\bf CL6} consisting only of the rules whose names are listed between the parentheses. So, (AME) refers to the system that only has axioms, exchange and mix. The same notation can be used next to the horizontal line separating two cirquents to indicate that the lower cirquent (``conclusion") can be obtained from the upper cirquent (``premise") by whatever number of applications of the corresponding rules. The following Lemmas 1, 2, 3, 4 are precisely Lemmas 4, 5, 10 and 11 of \cite{Key1}, so we state them without proofs (such proofs are given in \cite{Key1}).

\begin{lemma}
All of the rules of {\bf CL6} preserve truth in the top-down direction. Taking no premises, (the conclusion of)
axioms are thus tautologies.
\end{lemma}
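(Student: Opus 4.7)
The plan is to fix an arbitrary model $M$ and verify, case by case, that for each rule of \textbf{CL6} the truth of every premise in $M$ forces the truth of the conclusion in $M$. Because axioms are rules with no premises, the vacuous premise condition will simultaneously deliver the second assertion that every axiom is a tautology. For the three axioms this is immediate: the empty cirquent has no groups, hence is true in every $M$; in $(\langle\{1,2\}\rangle,\langle F,\neg F\rangle)$ the unique group contains both $F$ and $\neg F$, and exactly one of these is true in any model, so the group is true; and the $\top$-axiom's unique group contains $\top$, which is true by the definition of a model.

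For the structural rules the argument is routine. In \textbf{Mix}, the multiset of ogroups of the conclusion is the disjoint union of those of the two premises, so truth of both premises yields truth of the conclusion. \textbf{Exchange} merely permutes oformulas or ogroups without changing which formulas belong to which groups, and \textbf{Duplication} replaces a true ogroup by two identical copies that remain true. In \textbf{Weakening}, an existing true ogroup that gains an extra arc obviously stays true, and inserting a new oformula into the pool leaves all existing ogroups unchanged. For \textbf{Contraction}, the two adjacent occurrences of $F$ are merged into one oformula of the same truth value; any ogroup that formerly contained either occurrence now contains the merged $F$, and any other witness of truth in a group is unaffected. (Observe that the elementariness restriction on $F$, which is essential for CoL validity, plays no role in this purely classical-truth argument.)

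For the logical rules, \textbf{$\vee$-introduction} is straightforward: a conclusion ogroup $\Gamma$ containing $E\vee F$ arises from a premise ogroup that is $\Gamma$ with $E\vee F$ replaced by the adjacent $E$ and $F$; if either $E$ or $F$ witnesses truth above, then $E\vee F$ witnesses truth below, and any other witness is unchanged. The only mildly delicate case is \textbf{$\wedge$-introduction}, and I expect the write-up to dwell here. A conclusion ogroup $\Gamma$ containing $E\wedge F$ splits in the premise into adjacent $\Gamma^{E}$ and $\Gamma^{F}$, where $\Gamma^{E}$ contains $E$ but not $F$, $\Gamma^{F}$ contains $F$ but not $E$, and all other oformulas of $\Gamma$ belong to both. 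Assume the premise is true, and pick witnesses $\alpha\in\Gamma^{E}$ and $\beta\in\Gamma^{F}$ true in $M$. If $\alpha\neq E$, then $\alpha$ is a shared oformula of $\Gamma$, keeping $\Gamma$ true in the conclusion; symmetrically if $\beta\neq F$. Otherwise $\alpha=E$ and $\beta=F$ are both true, so $E\wedge F$ is true, and again $\Gamma$ is true. Ogroups of the conclusion that do not contain $E\wedge F$ are literally ogroups of the premise, so they trivially stay true. This completes the plan; the only obstacle of any substance is keeping the bookkeeping of shared versus non-shared oformulas straight in the $\wedge$-case.
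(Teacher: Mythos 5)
Your proof is correct, and the case-by-case verification (with the only nontrivial point being the $\wedge$-introduction case, which you handle properly by splitting on whether the witnesses of $\Gamma^{E}$ and $\Gamma^{F}$ are the shared oformulas or $E$ and $F$ themselves) is the standard argument. The paper itself gives no proof of this lemma --- it is stated as Lemma 4 of the cited Japaridze paper on cirquent calculus --- but your argument matches what that source does, and your observation that the elementariness restriction on contraction is irrelevant to classical truth preservation is accurate.
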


\begin{lemma}
The rules of mix, exchange, duplication, contraction, $\vee$-introduction and $\wedge$-introduction
preserve truth in the bottom-up direction as well.
\end{lemma}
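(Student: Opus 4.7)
The plan is to argue by case analysis on the six rules: fix an arbitrary model $M$ in which the conclusion is true, and show that each premise must be true in $M$ as well. In every case the underlying observation is that the rule either leaves individual oformulas untouched (and merely rearranges which ogroups contain them) or modifies oformulas in a semantically controlled way, so the work reduces to tracking where each ogroup of the premise comes from and which oformulas it contains. I note upfront that weakening is (correctly) absent from the list: adding an arc or inserting a new oformula can make a group true in the conclusion without making the corresponding premise group true.

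First I would dispose of the easy rules. For mix, the two premises' ogroups are precisely the ogroups of the conclusion, each containing the same oformulas, so the claim is immediate. Exchange only permutes oformulas or ogroups without altering the containment relation, so the truth value of every group is invariant. Duplication replaces a premise ogroup $\Gamma$ by two conclusion ogroups that, as groups, equal $\Gamma$; if both copies are true in $M$, then so is $\Gamma$. Contraction merges two adjacent identical elementary oformulas $F,F$ of the premise into a single $F$ with all arcs redirected, so every ogroup of the premise contains, as a set of oformulas, exactly the same formulas as the corresponding conclusion ogroup, and truth is preserved on the nose.

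The two cases that require a little more care are the $\vee$- and $\wedge$-introduction rules. For $\vee$-introduction, each conclusion ogroup $\Gamma$ containing $E\vee F$ becomes a premise ogroup containing both $E$ and $F$ in place of $E\vee F$. Given truth of $\Gamma$ in $M$, I case-split on the witness: if $E\vee F$ was true, then $E$ or $F$ is true and the premise ogroup inherits a true oformula; if some other oformula of $\Gamma$ was the witness, that same oformula sits in the premise ogroup unchanged. For $\wedge$-introduction, each conclusion ogroup $\Gamma$ containing $E\wedge F$ is split into premise ogroups $\Gamma^{E}$ and $\Gamma^{F}$, containing $E$ and $F$ respectively alongside every other oformula of $\Gamma$. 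If $E\wedge F$ witnesses truth of $\Gamma$, then both $E$ and $F$ are true in $M$, making $\Gamma^{E}$ and $\Gamma^{F}$ each true; otherwise the witnessing oformula lies in both $\Gamma^{E}$ and $\Gamma^{F}$, again giving the desired truth.

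The only mild obstacle is bookkeeping in the $\wedge$-introduction case, where a single conclusion ogroup fans out into two premise ogroups and one must verify that \emph{every} resulting ogroup of the premise is true, not just one of them. The case split on the witnessing oformula handles this uniformly, so no semantic subtlety arises.
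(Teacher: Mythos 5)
Your proof is correct; note that the paper itself states this lemma without proof, deferring to Lemma 5 of Japaridze's ``Introduction to cirquent calculus and abstract resource semantics'', and your rule-by-rule semantic case analysis (tracking, for each ogroup of the premise, a true oformula inherited from the corresponding ogroup of the conclusion) is exactly the standard argument given there, extended appropriately to contraction. The handling of the two nontrivial cases --- splitting on whether $E\vee F$ (resp.\ $E\wedge F$) or some other oformula witnesses the truth of the conclusion ogroup --- is precisely the right bookkeeping, and your observation that weakening is correctly excluded is a sound sanity check.
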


\begin{lemma}
The rules of mix, exchange, duplication, $\vee$-introduction and $\wedge$-introduction preserve binarity and normal binarity in
both top-down and bottom-up directions.
\end{lemma}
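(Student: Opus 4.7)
The plan is a case analysis over the five rules, verifying preservation of binarity and normal binarity in both directions. Both properties depend only on the pool of a cirquent --- more precisely, on the multiset of general-atom occurrences in the pool together with the polarity of each such occurrence --- and not on the structure. So the bulk of the work is simply to track what each rule does to these occurrences.

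First, for exchange and duplication the pool is left entirely untouched: exchange merely permutes adjacent oformulas or adjacent ogroups, and duplication only duplicates an ogroup. Hence the multiset of atom occurrences and every polarity are invariant between premise and conclusion, and preservation in both directions is immediate.

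Next, for $\vee$- and $\wedge$-introduction a single oformula $E\star F$ in the conclusion (with $\star\in\{\vee,\wedge\}$) is replaced in the premise by the two adjacent oformulas $E, F$. Since no negation is pushed through the split, the multiset of general-atom occurrences in the pool and their polarities are in bijective correspondence between premise and conclusion. The extra structural action of $\wedge$-introduction --- splitting each ogroup that contained $E\wedge F$ into two adjacent ogroups --- operates purely on ogroups and does not touch the pool. So both binarity and normal binarity carry over in both directions for these two rules.

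The delicate case is mix, whose conclusion-pool is the concatenation of the two premise-pools. The bottom-up direction is straightforward: each premise-pool is a sub-sequence of the conclusion-pool, so the number of occurrences of any given general atom in a premise is bounded by the count in the conclusion, and each polarity is inherited; thus (normal) binarity of the conclusion implies (normal) binarity of each premise. For the top-down direction one follows the convention of \cite{Key1} that the general atoms of the two cirquents entering a mix are chosen disjoint, so the concatenated pool contains at most two occurrences of each general atom, with opposite polarities whenever there are two. The main --- indeed only --- thing requiring care is this convention in the top-down mix case; everything else is routine bookkeeping, and I expect no genuine combinatorial obstacle.
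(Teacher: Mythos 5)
Your argument is correct, and there is nothing in the paper to compare it against: this lemma is stated here without proof, being imported verbatim as Lemma 10 of \cite{Key1}. You have correctly isolated the one genuinely delicate point --- top-down mix fails to preserve (normal) binarity when the two premises share a general atom, so the claim is only true under a disjointness proviso; note, however, that this disjointness is not a ``convention'' built into the mix rule itself, but is arranged explicitly at the point of use (in the proof of Lemma 7 of the present paper, the general atoms of $B'$ and $C'$ are renamed apart before mix is applied, which is exactly what makes your top-down mix case go through). The remaining cases --- exchange and duplication leave the pool untouched, $\vee$- and $\wedge$-introduction induce a polarity-preserving bijection on general-atom occurrences, and bottom-up mix restricts to a sub-pool --- are handled correctly.
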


\begin{lemma}
Weakening preserves binarity and normal binarity in the bottom-up direction.
\end{lemma}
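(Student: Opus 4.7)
The plan is to handle the two versions of weakening separately and, within each, simply count occurrences of each general atom in the premise versus the conclusion. Since binarity and normal binarity are purely occurrence-counting conditions on oformulas of the pool (arcs play no role), the verification reduces to tracking what a single weakening step does to the pool.

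For ogroup weakening, going bottom-up consists of deleting a single arc between an existing ogroup and an existing oformula. The pool of the premise is therefore \emph{identical} to that of the conclusion; in particular, for every general atom $P$, the multiset of occurrences of $P$ together with their polarities is exactly the same in premise and conclusion. Hence the ``at most two occurrences'' clause defining binarity and the ``polarity-balanced when exactly two'' clause defining normal binarity both transfer without any further work.

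For pool weakening, going bottom-up amounts to removing one oformula $F$ from the pool of the conclusion $C$ to get the premise $C'$. For any general atom $P$, the count of $P$-occurrences in $C'$ equals the count in $C$ minus the count in $F$, so it can only decrease. Consequently, if $C$ has at most two occurrences of every general atom, so does $C'$, which establishes preservation of binarity.

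The only point requiring slight care—and the one I would flag as the main thing to verify—is normal binarity for pool weakening. Assume $C$ is normal binary. For any general atom $P$, if $C'$ has zero or one occurrences of $P$, the normal binary condition at $P$ holds vacuously. If $C'$ still has two occurrences of $P$, then, because $C$ (being binary) had at most two to begin with, none of $P$'s occurrences in $C$ were removed when $F$ was deleted; hence those two occurrences remain one positive and one negative, as required. This exhausts the cases and completes the argument.
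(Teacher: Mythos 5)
Your argument is correct and complete. Note that the paper itself gives no proof of this lemma: it is stated as being precisely Lemma 11 of Japaridze's ``Introduction to cirquent calculus and abstract resource semantics'' and the proof is deferred there, so there is nothing in the present text to compare against line by line. Your occurrence-counting argument is exactly the kind of elementary verification the omission presupposes: you correctly observe that (normal) binarity depends only on the pool and on polarities of occurrences, that ogroup weakening read bottom-up leaves the pool untouched, and that pool weakening read bottom-up only deletes an oformula, so occurrence counts cannot increase. Your handling of the one delicate case --- normal binarity under pool weakening, where you use binarity of the conclusion to conclude that a general atom with two surviving occurrences in the premise had no occurrence in the deleted oformula, hence its two occurrences keep their opposite polarities --- is the right and necessary observation. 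This also implicitly explains why the lemma, unlike Lemma 3, claims preservation only in the bottom-up direction: top-down, pool weakening may insert an oformula with arbitrarily many occurrences of a general atom.
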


\begin{lemma}
If {\bf CL6} proves a cirquent $C$, then it also proves every instance of $C$.
\end{lemma}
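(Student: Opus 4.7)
My plan is to prove the lemma by induction on the length of a \textbf{CL6}-proof of $C$. The overarching observation is that every rule of \textbf{CL6} is schematic in nature: its applicability depends only on the shape of the cirquent (its structure, the identity of oformulas in the pool, and at most the top-level connective of the oformulas being acted upon), and all of these features commute with a substitution $\sigma$. Thus, given a proof $C_1,\ldots,C_n=C$, I claim that $\sigma(C_1),\ldots,\sigma(C_n)$ is also a proof. It therefore suffices to check, rule by rule, that if $C_i$ follows from some earlier $C_j$'s by a rule $\mathcal{R}$, then $\sigma(C_i)$ follows from the $\sigma(C_j)$'s by the same $\mathcal{R}$.

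For the axioms, the empty cirquent is sent to itself; the $F,\neg F$ axiom becomes $\sigma(F),\neg\sigma(F)$, which is again of that form (here one uses that, because $\neg$ is merely an abbreviation pushed down to the non-logical atoms, $\sigma$ commutes with $\neg$); and the $\top$-axiom is unaffected since $\sigma(\top)=\top$. For mix, exchange (in both versions), weakening (in both versions), and duplication, the rule only rearranges or adjoins oformulas, ogroups, or arcs without imposing conditions on the formulas involved, so these cases are immediate. For $\vee$-introduction and $\wedge$-introduction, the key point is that $\sigma(E\vee F)=\sigma(E)\vee\sigma(F)$ and $\sigma(E\wedge F)=\sigma(E)\wedge\sigma(F)$, so splitting an oformula $\sigma(E\vee F)$ or $\sigma(E\wedge F)$ in the conclusion $\sigma(C_i)$ produces exactly $\sigma$ applied to the splitting that was performed in $C_i$.

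The one case that requires a specific observation is contraction, whose applicability is restricted to elementary oformulas. Here I would record, as a small preliminary remark, that any elementary formula $F$ satisfies $\sigma(F)=F$, because elementary formulas contain no general atoms and $\sigma$ fixes every elementary atom. Consequently, if contraction is applied to two adjacent copies of an elementary $F$ in $C_j$ to produce $C_i$, then $\sigma(C_j)$ contains two adjacent copies of the same elementary $F$ in exactly the same positions, and the same application of contraction yields $\sigma(C_i)$.

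I do not anticipate a significant obstacle; the whole argument is a routine case analysis. The most delicate points to get right in writing are (i) the bookkeeping around the convention that $\neg$ is pushed down to non-logical atoms, so that $\sigma(\neg P)$ is properly interpreted as the negation-normal form of $\neg\sigma(P)$ (ensuring that $\sigma$ really does commute with the connectives), and (ii) the elementariness invariant used in the contraction case, which is the reason the contraction rule was restricted to elementary formulas in \textbf{CL6} in the first place and which makes this closure property hold for the full system rather than only for its general-atom fragment.
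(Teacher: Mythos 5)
Your proposal is correct and follows essentially the same route as the paper: apply $\sigma$ to every oformula of every cirquent in the proof tree and observe that the same rules still apply, yielding a proof of $\sigma(C)$. The paper dismisses the case analysis with ``it is not hard to see,'' whereas you usefully spell out the two points that actually need checking ($\sigma$ commuting with $\neg$ under the negation-normal-form conventions, and elementary formulas being fixed by $\sigma$ so that contraction remains applicable), but this is elaboration rather than a different argument.
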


\begin{proof}
Let $T$ be a proof tree of an arbitrary cirquent $C$, $C'$ be an arbitrary instance
of $C$, and $\sigma$ be a substitution with $\sigma(C)=C'$. Replace every oformula
$F$ of every cirquent of $T$ by $\sigma(F)$. It is not hard to see that the resulting
tree $T'$, which uses exactly the same rules as $T$ does, is a proof of $C'$.
\end{proof}

\begin{lemma}
Contraction preserves binarity and normal binarity
in both top-down and bottom-up directions.
\end{lemma}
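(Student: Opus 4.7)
The key observation is that the contraction rule, as stated in the preceding exposition, is restricted to the case where the contracted oformula $F$ is \emph{elementary}, meaning that $F$ contains no general atoms. This restriction essentially trivializes the statement, so the plan is to extract the argument directly from the definition of ``binary'' and ``normal binary.''

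First I would recall that a cirquent is binary iff no general atom has more than two oformula-occurrences in it, and normal binary iff, in addition, every general atom that does occur twice occurs once positively and once negatively. In other words, these properties depend only on the multiset of occurrences of general literals throughout the pool.

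Next, in the top-down direction, suppose the premise is (normal) binary and the rule turns the adjacent pair $F,F$ into a single $F$. Since $F$ is elementary, it contributes no general atoms to the pool at all. Hence the premise and conclusion have exactly the same multiset of general-literal occurrences (the only change being the loss of one copy of the elementary $F$, which contained none). Consequently the conclusion is (normal) binary as well. In the bottom-up direction, the situation is symmetric: going from the conclusion to the premise we duplicate the elementary oformula $F$, again adding no general-literal occurrences, so binarity and normal binarity transfer from the conclusion to the premise.

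There is no real obstacle here; the entire content of the lemma is the interplay between the elementariness restriction built into the contraction rule and the fact that binarity/normal binarity are conditions on general atoms only. The proof is essentially a one-line appeal to this observation, together with a short case-style remark covering both directions.
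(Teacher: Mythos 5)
Your proposal is correct and is essentially the paper's own argument: the paper's proof is the one-line observation that contraction restricted to elementary formulas cannot affect which general atoms occur in a cirquent or how many times they occur. You merely spell out the same point in slightly more detail for both directions.
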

\begin{proof}
This is so because contraction limited to elementary formulas
can never affect what general atoms occur in a cirquent and how many times they occur.
\end{proof}

\begin{lemma}
A cirquent is provable in {\bf CL6} iff it is an instance of a
binary tautology.
\end{lemma}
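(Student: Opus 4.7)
My plan is to prove the two directions of the biconditional separately.

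For the forward direction, suppose \textbf{CL6} proves $C$ via a proof tree $T$. I will construct in parallel a \textbf{CL6} proof $T'$ whose conclusion $B$ is a \emph{binary} cirquent, together with a substitution $\sigma$ satisfying $\sigma(B)=C$. The construction is a \emph{freshening} procedure applied to $T$: at every axiom $(\langle\{1,2\}\rangle,\langle F,\neg F\rangle)$, I replace every general-atom occurrence in $F$ by a brand-new general atom (yielding an axiom in which each general atom occurs exactly twice, once positively and once negatively); at every application of weakening that inserts an oformula $G$, I instead insert a copy of $G$ with entirely fresh general atoms. No other \textbf{CL6}-rule alters the multiplicity of general-atom occurrences, so by Lemmas 3 and 6 binarity is preserved all the way down $T'$. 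The root $B$ of $T'$ is then a binary cirquent provable in \textbf{CL6}, hence a tautology by Lemma 1; taking $\sigma$ to be the substitution that undoes the freshening, $\sigma(B)=C$, exhibiting $C$ as an instance of a binary tautology.

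For the reverse direction, by Lemma 5 it suffices to show that every binary tautology $B$ is itself \textbf{CL6}-provable. I plan to do this by passing through \textbf{CL2}. To a binary cirquent $B=(\langle\Gamma_1,\ldots,\Gamma_m\rangle,\langle F_1,\ldots,F_k\rangle)$ I associate the \textbf{CL2}-formula $B^{\ast}=\bigwedge_{j=1}^{m}\bigvee_{i\in\Gamma_j}F_i$, in which the binary sharing pattern of the general atoms of $B$ is automatically inherited by $B^{\ast}$. The tautologyhood of $B$ together with its binarity will, via the soundness-and-completeness characterization of \textbf{CL2}, yield a \textbf{CL2}-proof of $B^{\ast}$, which I will then translate rule-by-rule into a \textbf{CL6}-proof of $B$.

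The hardest step will be this translation of the \textbf{CL2}-proof into a \textbf{CL6}-derivation. The zero-premise (stable) case of rule (a) corresponds to a classical-tautology fragment of $B^{\ast}$, which can be handled inside \textbf{CL6} using the $\top$-axiom together with $\vee$- and $\wedge$-introduction and elementary contraction; rule (b) is simulated by $\vee$-introduction followed by ogroup weakening. The delicate case is rule (c), which replaces a positive and a negative occurrence of a general atom by a fresh non-logical elementary atom: on the \textbf{CL6}-side, mirroring this requires first deriving the instance with the elementary atom and then collapsing the resulting duplicate literals, which is precisely what elementary contraction---new in \textbf{CL6} over \textbf{CL5}---permits. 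A further subtlety is that binary tautologies need not be normal: when a general atom occurs twice with the same polarity, the ``one-positive-one-negative'' invariant underlying the \textbf{CL5}-style argument fails, and weakening, duplication and mix on the \textbf{CL6}-side must be used to realign the cirquent with the structure dictated by the \textbf{CL2}-proof.
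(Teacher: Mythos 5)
Your forward direction is sound and is essentially the paper's argument repackaged: the paper proceeds by induction on the proof tree, introducing fresh general atoms at axioms, at pool weakenings, and when mixing two premises that share an atom, and doing a small substitution adjustment at contractions; your global ``freshening'' of the proof tree achieves the same thing in one pass (and, pleasantly, makes the contraction case trivial, since contracted oformulas are elementary and hence untouched by the freshening). The one thing to state explicitly is that Lemma 3's top-down preservation of binarity under mix needs the two premises to have disjoint general atoms, which your construction guarantees only because fresh atoms introduced in disjoint subtrees are distinct.

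The reverse direction, however, has a genuine gap. First, the translation $B^{\ast}=\bigwedge_{j}\bigvee_{i\in\Gamma_j}F_i$ does \emph{not} inherit binarity from $B$: any oformula shared between two ogroups is duplicated in $B^{\ast}$, so a general atom occurring twice in $B$ can occur four or more times in $B^{\ast}$ (e.g.\ the binary cirquent with pool $\langle P,\neg P\rangle$ and structure $\langle\{1,2\},\{1,2\}\rangle$ yields $(P\vee\neg P)\wedge(P\vee\neg P)$). Since sharing is exactly the information that cirquents carry and formulas cannot, the hypothesis you need to invoke CL2-provability of $B^{\ast}$ simply fails, and even if a {\bf CL2}-proof of $B^{\ast}$ were available, it would at best yield a {\bf CL6}-proof of the one-ogroup cirquent $(\langle\{1\}\rangle,\langle B^{\ast}\rangle)$, from which recovering the cirquent $B$ with its original sharing pattern is an unaddressed (and in general impossible via the rules you list) step. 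Second, the detour through {\bf CL2} is circular in this paper's architecture: the syntactic characterization of {\bf CL2}-provability (Lemma 9) is proved via Lemma 8, which itself invokes Lemma 7 --- the very statement you are proving. The paper instead never leaves cirquent calculus here: starting from the binary tautology $A'$ it applies $\vee$- and $\wedge$-introductions bottom-up until an essentially literal cirquent is reached, then weakenings to shrink each ogroup to a single $\top$ or a single pair of opposite non-logical literals, then contractions to unshare elementary material, then duplications to eliminate identical-content ogroups, arriving at a cirquent provable from axioms, exchange and mix alone. Some such direct construction (or an independent, non-circular proof that binary tautologies are {\bf CL2}-provable together with a worked-out formula-to-cirquent transfer) is needed to close your argument.
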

\begin{proof}
$(\Rightarrow)$ Consider an arbitrary cirquent $A$ provable in {\bf
CL6}. By induction on the height of its proof tree, we want to show
that $A$ is an instance of a binary tautology.

The above is obvious when $A$ is an axiom.

Suppose now $A$ is derived by exchange from $B$. Let us just
consider oformula exchange, with ogroup exchange being similar. By
the induction hypothesis, $B$ is an instance of a binary tautology $B'$.
Let $A'$ be the result of applying exchange to $B'$ ``at the
same place" as it was applied to $B$ when deriving $A$ from it, as
illustrated in the following example:
\begin{center}
\begin{picture}(80,90)(0,50)\footnotesize
\put(0,0){\begin{picture}(80,90)\put(-120,130){\line(1,0){120}}
\put(-120,120){$P\ \ \ s\ \ \ \neg P\ \ \ \ P\vee r\ \ \ \neg
P\wedge\neg r$}

\put(-117,118){\line(0,-1){21}}\put(-104,118){\line(-3,-5){13}}\put(-86,118){\line(-4,-3){32}}

\put(-86,118){\line(5,-4){27}}\put(-60,118){\line(0,-1){20}}\put(-24,118){\line(-5,-3){35}}\put(-117,95){\circle*{4}}\put(-60,95){\circle*{4}}

\put(-120,91){\line(1,0){120}}\put(1,88){${\bf E}$}

\put(-120,81){$P\ \ \neg P\ \ \ s\ \ \ \ \ P\vee r\ \ \ \neg
P\wedge\neg r$}

\put(-117,79){\line(0,-1){21}}\put(-104,79){\line(-3,-5){13}}\put(-85,79){\line(-3,-2){32}}

\put(-104,79){\line(2,-1){43}}\put(-60,79){\line(0,-1){20}}\put(-24,79){\line(-5,-3){37}}\put(-117,57){\circle*{4}}\put(-60,57){\circle*{4}}

\put(-140,107){$B:$}\put(-140,68){$A:$}\end{picture}}

\put(180,0){\begin{picture}(80,90)\put(-120,130){\line(1,0){120}}
\put(-119,120){$ E\ \ \ s\ \ \ \neg E\ \ \ R\vee r\ \ \ \ \neg
R\wedge\neg r$}

\put(-117,118){\line(0,-1){21}}\put(-104,118){\line(-3,-5){13}}\put(-86,118){\line(-4,-3){32}}

\put(-86,118){\line(5,-4){27}}\put(-60,118){\line(0,-1){20}}\put(-24,118){\line(-5,-3){35}}\put(-117,95){\circle*{4}}\put(-60,95){\circle*{4}}

\put(-120,91){\line(1,0){120}}\put(1,88){${\bf E}$}

\put(-119,81){$ E\ \ \neg E\ \ \ s\ \ \ \ R\vee r\ \ \ \ \neg
R\wedge\neg r$}

\put(-117,79){\line(0,-1){21}}\put(-104,79){\line(-3,-5){13}}\put(-85,79){\line(-3,-2){32}}

\put(-104,79){\line(2,-1){43}}\put(-60,79){\line(0,-1){20}}\put(-24,79){\line(-5,-3){37}}\put(-117,57){\circle*{4}}\put(-60,57){\circle*{4}}

\put(-140,107){$B':$}\put(-140,68){$A':$}\end{picture}}

\end{picture}
\end{center}
Obviously $A$ will be an instance of $A'$. It remains to note that,
by Lemmas 1 and 3, $A'$ is a binary tautology.

The rules of duplication, $\vee$-introduction and
$\wedge$-introduction can be handled in a similar way.

Next, suppose $A$ is derived from $B$ and $C$ by mix. By the
induction hypothesis, $B$ and $C$ are instances of some binary tautologies
$B'$ and $C'$, respectively. We
may assume that no general atom $P$ occurs in
both $B'$ and $C'$, for otherwise, in one of the cirquents, rename
$P$ into another general atom $Q$ different from everything else. Let $A'$ be the
result of applying mix to $B'$ and $C'$. By Lemmas 1 and 3,
$A'$ is a binary tautology. And, as in the cases of the other rules, it is
evident that $A$ is an instance of $A'$.

Suppose $A$ is derived from $B$ by weakening. If this is ogroup
weakening, $A$ is an instance of a binary tautology for the same reason as in the case
of exchange, duplication, $\vee$-introduction or
$\wedge$-introduction. Assume now we are dealing with pool
weakening, so that $A$ is the result of inserting a new oformula $F$
into $B$. By the induction hypothesis, $B$ is an instance of a binary
tautology $B'$. Let $P$ be a general atom not occurring in $B'$. And
let $A'$ be the result of applying weakening to $B'$ that inserts $P$
``at the same place" into $B'$ as the above application of weakening
inserted $F$ into $B$ when deriving $A$. Obviously $A'$ inherits binarity
from $B'$; by Lemma 1, it inherits from $B'$ tautologicity as well.
And, for the same reason as in all previous cases, $A$ is an instance of $A'$.

Finally, suppose $A$ is derived from $B$ by contraction. Then the contracted
formula $F$ should be elementary. By the induction hypothesis, $B$ is an
instance of a binary tautology $B'$. Let $\sigma$ be a substitution such that
$B=\sigma(B')$. And let $F'_1$, $F'_2$ be two oformulas in $B'$ ``at the same place" as
$F$, $F$ are in $B$, with $\sigma(F'_1)=F$ and $\sigma(F'_2)=F$.
Let $\delta$ be the substitution such that, for any general atom $P$, $\delta(P)=\sigma(P)$
if $P$ occurs in $F'_1$ or $F'_2$, and $\delta(P)=P$ otherwise. Thus, $\delta(F'_1)=\delta(F'_2)=F$.
And let $B''=\delta(B')$. Obviously --- for the same reasons as in classical logic --- substitution
does not destroy tautologicity, so $B''$ is a tautology because $B'$ is so. Further, the substitution
$\delta$ does not introduce any new occurrences of general atoms, so it does not destroy the
binarity of $B'$, either. To summarize, $B''$ is a binary tautology. Also, of course, $B$ is an
instance of $B''$. Notice that $B''$ has $F$ and $F$ where $B$ has the contracted oformulas $F$ and
$F$. So, let $A'$ be the result of applying contraction to $B''$ ``at the same place" as it
was applied to $B$ when deriving $A$ from it, as illustrated in the
following example:

\begin{center}
\begin{picture}(80,90)(10,50)\footnotesize
\put(0,0){\begin{picture}(80,90)\put(-120,130){\line(1,0){152}}
\put(-120,120){$P\ \  r\wedge s\ \ \ r\wedge s\ \ \ \ \ \neg P\ \ \ \
P\vee q\ \ \ \neg P\wedge\neg q$}

\put(0,0){\begin{picture}(80,90)\put(-117,118){\line(2,-1){40}}\put(-97,118){\line(1,-1){20}}\put(-76,118){\line(0,-1){20}}\put(-45,118){\line(-3,-2){32}}

\put(-45,118){\line(3,-2){32}}\put(-15,118){\line(0,-1){20}}\put(18,118){\line(-5,-3){33}}\put(-76,97){\circle*{4}}\put(-15,97){\circle*{4}}\end{picture}}
\put(-120,93){\line(1,0){150}}\put(31,90){${\bf C}$}

\put(-120,84){$P\ \ \ \ \ \ \ \ \ \ r\wedge s\ \ \ \ \ \ \neg P\ \ \ \ P\vee q\ \ \ \neg
P\wedge\neg q$}

\put(0,-36){\begin{picture}(80,90)\put(-117,118){\line(2,-1){40}}\put(-76,118){\line(0,-1){20}}\put(-45,118){\line(-3,-2){32}}

\put(-45,118){\line(3,-2){32}}\put(-15,118){\line(0,-1){20}}\put(18,118){\line(-5,-3){33}}\put(-76,97){\circle*{4}}\put(-15,97){\circle*{4}}\end{picture}}

\put(-140,108){$B:$}\put(-140,73){$A:$}\end{picture}}

\put(200,0){\begin{picture}(80,90)\put(-120,130){\line(1,0){152}}
\put(-120,120){$R\ \ r\wedge s\ \ \ r\wedge s\ \ \ \ \neg R\ \
\ \ \ Q\vee q\ \ \ \neg Q\wedge\neg q$}

\put(0,0){\begin{picture}(80,90)\put(-117,118){\line(2,-1){40}}\put(-97,118){\line(1,-1){20}}\put(-76,118){\line(0,-1){20}}\put(-45,118){\line(-3,-2){32}}

\put(-45,118){\line(3,-2){32}}\put(-15,118){\line(0,-1){20}}\put(18,118){\line(-5,-3){33}}\put(-76,97){\circle*{4}}\put(-15,97){\circle*{4}}\end{picture}}

\put(-120,93){\line(1,0){150}}\put(31,90){${\bf C}$}

\put(-120,84){$R\ \ \ \ \ \ \ \ \ \ r\wedge s\ \ \ \ \ \ \neg R\ \ \ \ Q\vee q\ \
\ \neg Q\wedge\neg q$}

\put(0,-36){\begin{picture}(80,90)\put(-117,118){\line(2,-1){40}}\put(-76,118){\line(0,-1){20}}\put(-45,118){\line(-3,-2){32}}

\put(-45,118){\line(3,-2){32}}\put(-15,118){\line(0,-1){20}}\put(18,118){\line(-5,-3){33}}\put(-76,97){\circle*{4}}\put(-15,97){\circle*{4}}\end{picture}}
\put(-140,108){$B'':$}\put(-140,73){$A':$}\end{picture}}
\end{picture}
\end{center}
Obviously $A$ will be an instance of $A'$. And, by Lemma 1 and Lemma 6, $A'$ is a binary tautology.

$(\Leftarrow)$ Consider an arbitrary cirquent $A$ that is an
instance of a binary tautology $A'$. In view of Lemma 5, it would suffice to
show that {\bf CL6} proves $A'$. We construct a proof of $A'$, in the bottom-up fashion, as follows.
Starting from $A'$, we keep applying $\vee$-introduction and $\wedge$-introduction
until we hit an essentially literal cirquent\footnote[1]{An essentially literal cirquent, defined in \cite{Key1},
is one every oformula of whose pool either is an oliteral or is homeless.} $B$. As in the proof of Theorem 6 of \cite{Key1},
such a cirquent $B$ is guaranteed to be a tautology, and $A'$ follows from it in $(\vee\wedge)$.
Furthermore, in view of Lemma 3, $B$ is in fact a binary tautology.
The tautologicity of $B$ means that every ogroup of it contains either a $\top$, or at least one pair of opposite
(general or elementary) non-logical oliterals. For each ogroup of $B$ that contains a $\top$, pick one occurrence of
$\top$ and apply to $B$ a series of
weakenings to first delete all arcs but the arc pointing to the chosen occurrence, and next delete all
homeless oformulas if any such oformulas are present. For each ogroup of the resulting cirquent that contains
a pair of opposite non-logical oliterals, pick one such pair, and
continue applying a series of weakenings, as in the proof of Theorem 6 of \cite{Key1}, until a tautological cirquent
$C$ is hit with no homeless oformulas, where
every ogroup only has either a $\top$ or a pair of opposite non-logical oliterals.
By Lemma 4, $C$ remains binary.
Our target cirquent $A'$ is thus derivable from $C$ in (W$\vee\wedge$).
 Apply a series of contractions
to $C$ to separate all shared $\top$ and all shared elementary non-logical oliterals $p$ or $\neg p$, as illustrated below;
as a result, we get a cirquent $D$ which is still a binary tautology, but whose ogroups no longer share any elementary oformulas.
\begin{center}
\begin{picture}(80,90)(15,50)\footnotesize
\put(80,0){\begin{picture}(80,90)\put(-160,130){\line(1,0){268}}
\put(-160,120){$P\ \ \ \ \neg P\ \ \ r\ \ \ \ \neg r\ \ \neg r\ \ \ \ \ r\ \ \ \ r\ \ \ \neg Q\ \ \neg r\ \ \neg
s \ \ \neg s\ \ \ Q \ \ \ s\ \ \ \ s\ \ \top\ \ \ \top\ \ \ \top$}

\put(-120,118){\line(1,-2){11}}\put(-100,118){\line(-1,-2){11}}

\put(-31,118){\line(1,-1){21}}\put(-109.5,96){\circle*{4}}\put(-75,96){\circle*{4}}
\put(-85,118){\line(1,-2){10}}\put(-65,118){\line(-1,-2){10}}
\put(-49,96){\circle*{4}}\put(44,118){\line(-2,-1){43}}\put(0,96){\circle*{4}}\put(31,118){\line(-2,-1){40}}
\put(0,118){\line(0,-1){21}}\put(-10,96){\circle*{4}}
\put(-50,118){\line(0,-1){21}}\put(-16,118){\line(-3,-2){33}}
\put(15,96){\circle*{4}}\put(15,118){\line(0,-1){21}}\put(59,118){\line(-2,-1){43}}\put(71,118){\line(0,-1){21}}\put(71,96){\circle*{4}}
\put(86,118){\line(0,-1){21}}\put(86,96){\circle*{4}}\put(101,118){\line(0,-1){21}}\put(101,96){\circle*{4}}

\put(-160,91){\line(1,0){268}}\put(110,88){\bf (C)}
\put(-40,39){\begin{picture}(80,90)\put(-117,79){\line(0,-1){21}}\put(-95,79){\line(-1,-1){22}}\put(-95,79){\line(0,-1){21}}
\put(-95,57){\circle*{4}}\put(-117,79){\line(1,-1){22}}\put(-116,57){\circle*{4}}
\end{picture}}

\put(16,0){\begin{picture}(80,90)\put(-150,81){$P\ \ \ \ \neg P\ \ \ r\ \ \ \ \neg r\ \ \ \ \ r\ \ \ \neg Q\ \ \neg r\ \ \ \neg
s \ \ \ \ Q \ \ \ s\ \ \ \top\ \ \ \ \ \ \top$}

\put(-30,0){\begin{picture}(80,90)\put(-117,79){\line(0,-1){21}}\put(-95,79){\line(-1,-1){22}}\put(-95,79){\line(0,-1){21}}
\put(-95,57){\circle*{4}}\put(-117,79){\line(1,-1){22}}
\end{picture}}
\put(-52,79){\line(1,-1){21}}
\put(-147,57){\circle*{4}}
\put(-31,57){\circle*{4}}
\put(-110,79){\line(1,-2){10}}\put(-90,79){\line(-1,-2){10}}\put(-90,79){\line(1,-2){10}}\put(-70,79){\line(-1,-2){10}}
\put(-70,79){\line(0,-1){21}}\put(-70,57){\circle*{4}}\put(-80,57){\circle*{4}}\put(-38,79){\line(-3,-2){33}}
\put(-100,57){\circle*{4}}\put(-1,79){\line(-3,-2){30}}
\put(14,79){\line(-3,-2){33}}
\put(-18,79){\line(0,-1){21}}\put(-18,57){\circle*{4}}\put(14,79){\line(0,-1){21}}\put(14,57){\circle*{4}}\put(-18,79){\line(3,-2){33}}
\put(28,79){\line(0,-1){21}}\put(28,57){\circle*{4}}\put(52,79){\line(-1,-2){10}}\put(52,79){\line(1,-2){10}}\put(42,57){\circle*{4}}
\put(62,57){\circle*{4}}
\end{picture}}
\put(-180,107){$D:$}\put(-180,68){$C:$}\end{picture}}
\end{picture}
\end{center}
It is easy to see that the binarity of $D$ implies that there are no shared general oliterals $P$ or $\neg P$
in it except the cases when they are shared by identical-content ogroups.
Applying to $D$ a series of duplications, as illustrated below,
yields a cirquent $E$ that no longer has identical-content ogroups and hence no longer
has any shared oformulas.
\begin{center}
\begin{picture}(80,90)(15,50)\footnotesize
\put(80,0){\begin{picture}(80,90)\put(-160,130){\line(1,0){268}}
\put(-160,120){$P\ \ \ \ \neg P\ \ \ r\ \ \ \ \neg r\ \ \neg r\ \ \ \ \ r\ \ \ \ r\ \ \ \neg Q\ \ \neg r\ \ \neg
s \ \ \neg s\ \ \ Q \ \ \ s\ \ \ \ s\ \ \top\ \ \ \top\ \ \ \top$}

\put(-120,118){\line(1,-2){11}}\put(-100,118){\line(-1,-2){11}}

\put(-31,118){\line(1,-1){21}}\put(-109.5,96){\circle*{4}}\put(-75,96){\circle*{4}}
\put(-85,118){\line(1,-2){10}}\put(-65,118){\line(-1,-2){10}}
\put(-49,96){\circle*{4}}\put(44,118){\line(-2,-1){43}}\put(0,96){\circle*{4}}\put(31,118){\line(-2,-1){40}}
\put(0,118){\line(0,-1){21}}\put(-10,96){\circle*{4}}
\put(-50,118){\line(0,-1){21}}\put(-16,118){\line(-3,-2){33}}
\put(15,96){\circle*{4}}\put(15,118){\line(0,-1){21}}\put(59,118){\line(-2,-1){43}}\put(71,118){\line(0,-1){21}}\put(71,96){\circle*{4}}
\put(101,118){\line(0,-1){21}}\put(101,96){\circle*{4}}\put(86,118){\line(0,-1){21}}\put(86,96){\circle*{4}}

\put(-160,91){\line(1,0){268}}\put(110,88){\bf (D)}
\put(-40,39){\begin{picture}(80,90)\put(-117,79){\line(0,-1){21}}\put(-95,79){\line(-1,-1){22}}
\put(-116,57){\circle*{4}}
\end{picture}}

\put(-180,107){$E:$}\end{picture}}
\put(80,-40){\begin{picture}(80,90)
\put(-160,120){$P\ \ \ \ \neg P\ \ \ r\ \ \ \ \neg r\ \ \neg r\ \ \ \ \ r\ \ \ \ r\ \ \ \neg Q\ \ \neg r\ \ \neg
s \ \ \neg s\ \ \ Q \ \ \ s\ \ \ \ s\ \ \top\ \ \ \top\ \ \ \top$}

\put(-120,118){\line(1,-2){11}}\put(-100,118){\line(-1,-2){11}}

\put(-31,118){\line(1,-1){21}}\put(-109.5,96){\circle*{4}}\put(-75,96){\circle*{4}}
\put(-85,118){\line(1,-2){10}}\put(-65,118){\line(-1,-2){10}}
\put(-49,96){\circle*{4}}\put(44,118){\line(-2,-1){43}}\put(0,96){\circle*{4}}\put(31,118){\line(-2,-1){40}}
\put(0,118){\line(0,-1){21}}\put(-10,96){\circle*{4}}
\put(-50,118){\line(0,-1){21}}\put(-16,118){\line(-3,-2){33}}
\put(15,96){\circle*{4}}\put(15,118){\line(0,-1){21}}\put(59,118){\line(-2,-1){43}}\put(71,118){\line(0,-1){21}}\put(71,96){\circle*{4}}

\put(-40,39){\begin{picture}(80,90)\put(-117,79){\line(0,-1){21}}\put(-95,79){\line(-1,-1){22}}\put(-95,79){\line(0,-1){21}}
\put(-95,57){\circle*{4}}\put(-117,79){\line(1,-1){22}}\put(-116,57){\circle*{4}}\put(126,79){\line(0,-1){21}}\put(126,57){\circle*{4}}
\put(141,79){\line(0,-1){21}}\put(141,57){\circle*{4}}
\end{picture}}

\put(-180,107){$D:$}\end{picture}}
\end{picture}
\end{center}
$A'$ is thus derivable from $E$ in (DCW$\vee\wedge$). In turn,
$E$ is obviously provable in (AME). So, {\bf CL6} proves $A'$.
\end{proof}

\begin{lemma}
A cirquent is an instance of a binary tautology iff it is an atomic-level instance of
some normal binary tautology.
\end{lemma}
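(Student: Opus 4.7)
The ($\Leftarrow$) direction is immediate: a normal binary tautology is in particular a binary tautology, and an atomic-level substitution is in particular a substitution, so any atomic-level instance of a normal binary tautology is an instance of a binary tautology.

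For ($\Rightarrow$), suppose $\sigma(B)=C$ with $B$ a binary tautology. The plan has two stages. Stage 1 (\textbf{Normalization}): iteratively rename general atoms of $B$ to produce a normal binary cirquent. So long as the current cirquent has some general atom $P$ occurring twice with the same polarity, pick one of the two occurrences and rename it to a fresh general atom $P'$, extending the substitution by $\sigma^{*}(P') := \sigma(P)$. Since $B$ is binary, each general atom needs to be touched at most once, so the process halts, yielding a normal binary cirquent $B^{*}$ with $\sigma^{*}(B^{*})=C$. The key verification is that a single renaming step preserves tautologicity. Given a model $M$ of the renamed cirquent $B'$ and an ogroup $\Gamma$, let $M_1$ and $M_2$ be the models of the original $B$ that agree with $M$ off $\{P,P'\}$ and set $P := M(P)$ resp.\ $P := M(P')$. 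Tautologicity of $B$ gives a true oformula of $\Gamma$ under both $M_1$ and $M_2$. A short case analysis --- crucially using that each oformula of $B$ containing $P$ is \emph{monotone} in $P$ (since the two $P$-occurrences share polarity and $B$ is binary, so no occurrence of $\neg P$ exists) --- shows $\Gamma$ has a true oformula under $M$ in $B'$. The delicate subcase is when both occurrences of $P$ lie inside a single oformula $F$ and $F$ is the only true oformula of $\Gamma$ under both $M_1$ and $M_2$: viewing $F$ as a function of its two $P$-leaves and applying monotonicity coordinate-wise, one concludes that $F$ evaluated at $(l_1 = M(P),\, l_2 = M(P'))$, which is its value under $M$ in $B'$, is still true.

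Stage 2 (\textbf{Atomization}): for each general atom $Q$ of $B^{*}$, let $G_Q$ be the formula obtained from $\sigma^{*}(Q)$ by replacing each general-atom leaf with a \emph{distinct} fresh general atom (elementary atoms are retained). Let $\sigma'$ be the substitution $Q \mapsto G_Q$ and set $D := \sigma'(B^{*})$. Because substitution preserves tautologicity, $D$ is a tautology. Because $B^{*}$ is normal binary, each general atom $Q$ occurs at most twice and, when twice, with opposite polarities; hence each fresh general atom of $D$ appears at most twice, and the two occurrences, sitting in copies of $G_Q$ of opposite overall polarities, are themselves of opposite polarities --- so $D$ is normal binary. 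Finally, let $\tau$ be the atomic-level substitution that sends each fresh general atom back to the atom of $\sigma^{*}(Q)$ it came from and fixes elementary atoms. Then $\tau \circ \sigma'$ acts as $\sigma^{*}$ on atoms of $B^{*}$, so $\tau(D) = \sigma^{*}(B^{*}) = C$, exhibiting $C$ as an atomic-level instance of the normal binary tautology $D$.

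The main obstacle is the tautologicity-preservation claim in Stage 1: the case analysis has to separately handle the different-oformula and same-oformula positions of the duplicated $P$, and the monotonicity observation must be applied leaf-by-leaf in the same-oformula subcase. Stage 2 is then essentially bookkeeping: uniform substitution preserves tautologicity for free, and the ``freshness'' and normal-binarity of $B^{*}$ together guarantee the normal binarity of $D$.
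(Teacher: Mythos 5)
Your proof is correct and follows essentially the same route as the paper's: normalize by renaming one of each pair of same-polarity occurrences of a general atom to a fresh general atom (justified by the same polarity/monotonicity argument, which the paper runs contrapositively and in one batch rather than one renaming at a time), then atomize by substituting for each general atom a copy of its $\sigma$-image built from pairwise-distinct fresh general atoms and mapping those back by an atomic-level substitution. The only substantive difference is that you justify tautologicity of the atomized cirquent directly by ``uniform substitution preserves tautologicity,'' whereas the paper routes this through Lemmas 7, 5 and 1 (provability of the normal binary tautology in {\bf CL6}, closure of provability under instances, and truth-preservation of the rules); both justifications are sound.
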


\begin{proof}Our proof here almost literally follows the proof of Lemma 9 of \cite{Key1}.

The ``if" part is trivial. For the ``only if" part, assume $A$ is an instance of a
binary tautology $B$. Let $P_1,\ldots,P_n$ be all of the general atoms of $B$ that
have two positive or two negative occurrences in $B$. Let $Q_1,\ldots,Q_n$ be any pairwise
distinct general atoms not occurring in $B$. Let $C$ be the result of replacing in $B$
one of the two occurrences of $P_i$ by $Q_i$, for each $i=1,\ldots,n$. Then obviously $C$
is a normal binary cirquent, and $B$ an instance of it. By transitivity, $A$ (as an instance of $B$)
is also an instance of $C$.

We want to see that $C$ is a tautology. Deny this. Then there is a classical model $M$ in which
$C$ is false. Let $M'$ be the model such that:
\begin{itemize}
\item $M'$ agrees with $M$ on all atoms that are not among $P_1,\ldots,P_n,Q_1,\ldots,Q_n$;
\item for each $i\in\{1,\ldots,n\}$, $M'(P_i)=M'(Q_i)=${\itshape false} if $P_i$ and $Q_i$ are positive in $C$;
and $M'(P_i)=M'(Q_i)=${\itshape true} if $P_i$ and $Q_i$ are negative in $C$.
\end{itemize}
By induction on complexity, it can be easily seen that, for every subformula $F$
of a formula of $C$, whenever $F$ is false in $M$, so is it in $M'$.
This extends from (sub)formulas to groups of $C$ and hence $C$ itself. Thus $C$ is false
in $M'$ because it is false in $M$. But $M'$ does not distinguish between $P_i$ and $Q_i$
(any $1\leq i\leq n$). This clearly implies that $C$ and $B$ have the same truth value in $M'$.
That is, $B$ is false in $M'$, which is however impossible because $B$ is a tautology.
From this contradiction we conclude that $C$ is a (normal binary) tautology.

Let $\sigma$ be a substitution such that $A=\sigma(C)$. Let $\sigma'$ be a substitution such that,
for each general atom $P$ of $C$, $\sigma'(P)$ is the result of replacing in $\sigma(P)$ each occurrence
of each general atom by a new general atom in such a way that: no general atom occurs more than once in $\sigma'(P)$, and
whenever $P\neq Q$, no general atom occurs in both $\sigma'(P)$ and $\sigma'(Q)$.
Since $C$ is a binary tautology and is its own instance, by Lemma 7, {\bf CL6} proves $C$.
Then, by Lemma 5, {\bf CL6} proves $\sigma'(C)$ (an instance of $C$).
In view of Lemma 1, we immediately get that $\sigma'(C)$ is a tautology.
$\sigma'(C)$ can also be easily seen to be a normal binary cirquent, because $C$ is so.
Finally, with a little thought, $A$ can be seen to be an atomic-level instance of $\sigma'(C)$.
\end{proof}

\begin{lemma}
A {\bf CL6}-formula $F$ is provable in {\bf CL2} iff it is an
instance of a binary tautology.
\end{lemma}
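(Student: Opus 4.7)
My plan is to argue both directions syntactically, exploiting the very restrictive shape a \textbf{CL2}-proof must have when its conclusion is a \textbf{CL6}-formula. The key observation is that a \textbf{CL6}-formula contains no $\sqcap$ or $\sqcup$, and that rule (a) with non-empty premises and rule (b) both require a $\sqcap$ (resp.\ $\sqcup$) subformula in their conclusion. Hence, by induction on proof height, every formula appearing in a \textbf{CL2}-proof of a \textbf{CL6}-formula $F$ is itself a \textbf{CL6}-formula, and any such proof has the form of a linear chain $F_0, F_1, \dots, F_n = F$ in which $F_0$ is stable (axiom via rule (a) with no premises) and each $F_i \to F_{i+1}$ is an application of rule (c) replacing, in two surface positions (one positive and one negative), a fresh elementary atom $p_i$ by a general atom $P_i$. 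Since $p_i$ is fresh at its step and is immediately consumed, it occurs in $F_0$ exactly twice (once positively and once negatively), and different rule (c) steps act on disjoint pairs of positions.

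For the $(\Rightarrow)$ direction I will turn such a chain into a binary tautology $G$ with $F = \sigma(G)$. Construct $G$ from $F_0$ by (i) replacing each occurrence of each general atom of $F_0$ by a fresh, distinct general atom (one per occurrence), and (ii) replacing, for each $i$, the two $p_i$-occurrences by one positive and one negative occurrence of a new general atom $R_i$. Then $G$ is binary since every new general atom occurs at most twice. To see that $G$ is a classical tautology, given any assignment $\mu$ to $G$'s atoms, define an assignment $\mu'$ to the atoms of the elementarization of $F_0$ by $\mu'(p_i) = \mu(R_i)$ and agreeing with $\mu$ on all other elementary atoms. At every position the value carried by $G$ under $\mu$ dominates the value carried by the elementarization of $F_0$ under $\mu'$: the two agree at elementary-atom positions, and at positions of the general atoms of $F_0$ the elementarization has $\bot$ while $G$ carries an arbitrary truth value. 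By monotonicity of the connectives in positive contexts, $G(\mu) \geq \top$, so $G$ is a classical tautology. Defining $\sigma$ to send each per-occurrence fresh atom back to the original general atom of $F_0$ and each $R_i$ to $P_i$ then gives $\sigma(G) = F$.

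For the $(\Leftarrow)$ direction, I invoke Lemma 9 to write $F = \sigma(G)$ with $G$ a normal binary tautology and $\sigma$ atomic-level. Define $\sigma^{\#}$ to agree with $\sigma$ except that, for every $P_i$ of $G$ which has both a positive and a negative occurrence and with $\sigma(P_i)$ general, we set $\sigma^{\#}(P_i) = p_i$ for a fresh elementary atom $p_i$ (distinct $p_i$ for distinct $i$, all new relative to $F$ and $G$). Let $F^{\#} = \sigma^{\#}(G)$. Substitution preserves classical tautologicity, so $F^{\#}$ is a classical tautology; the crux is to verify $F^{\#}$ is stable. Given any classical assignment $\nu$ to the atoms of the elementarization of $F^{\#}$, extend $\nu$ to a $G$-assignment $\mu$ by the polarity rule: for each general atom of $G$ that has only a positive occurrence set $\mu$ to $\bot$, for each one with only a negative occurrence set $\mu$ to $\top$, and for each 2-occurrence $P_i$ let $\mu(P_i)$ be the value of the corresponding $p_i$ (or of $\sigma(P_i)$ if the latter is elementary). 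A position-by-position case check confirms that the elementarization of $F^{\#}$ under $\nu$ equals $G(\mu) = \top$, so $F^{\#}$ is stable and hence a \textbf{CL2}-axiom via rule (a) with no premises. Finally, one application of rule (c) for each $P_i$ of the type ``two occurrences with general $\sigma(P_i)$'', processed in any order, converts each $p_i$ back to $\sigma(P_i)$ and yields $F$; the freshness condition at each step is automatic because $p_i$'s two occurrences are consumed in its single step. The hardest part is precisely this stability verification: the single-occurrence general atoms of $G$ cannot be processed by rule (c) and must persist as general atoms in the axiom, so only the polarity-dependent extension of $\nu$ above correctly aligns the elementarization of $F^{\#}$ with $G$ pointwise.
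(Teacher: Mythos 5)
Your proof is correct and follows essentially the same route as the paper's: both directions rest on the observation that a {\bf CL2}-proof of a {\bf CL6}-formula is a linear chain of Rule (c) steps below a stable Rule (a) axiom, with the axiom turned into a binary tautology by trading general literals and paired fresh elementary atoms for general atoms in one direction, and the normalized binary tautology turned into a stable axiom by the reverse substitution in the other. Two small points for the record: the normalization you invoke is Lemma 8, not Lemma 9 (the latter being the statement under proof), and in your stability check the pointwise comparison at single general atoms whose $\sigma$-image is elementary is a domination rather than an equality --- but the monotonicity argument you already use elsewhere still forces the elementarization of $F^{\#}$ to evaluate to {\itshape true}, so the conclusion stands.
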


\begin{proof}Again, it should be acknowledged that the present proof very closely follows
the proof of Lemma 27 of \cite{Key1}, even though there are certain differences.

$(\Rightarrow)$ Consider an arbitrary {\bf CL6}-formula $F$ provable
in {\bf CL2}. Fix a {\bf CL2}-proof of $F$ in the form of a sequence
$\langle F_n, F_{n-1}, \ldots, F_1\rangle$ of formulas, with $F_1 = F$. We may
assume that this sequence has no repetitions or other redundancies.
We claim that, for each $i$ with $1\leq i\leq n$, the following
conditions are satisfied:

{\bf Condition 1:} $F_i$ does not contain $\sqcap,\sqcup$.

{\bf Condition 2:} Whenever $F_i$ contains an elementary atom not occurring in $F$,
that atom is
non-logical, and has exactly two --- one positive and one negative
--- occurrences in $F_i$.

{\bf Condition 3:} If $i < n$, then $F_i$ is derived from
$F_{i+1}$ by Rule (c).

{\bf Condition 4:} $F_n$ is derived (from the empty set of
premises) by Rule (a).

Condition 4 is obvious, because it is only Rule (a) that may take
no premises. That Conditions 1-3 are also satisfied can be
verified by induction on $i$. For the basis case of $i =1$,
Conditions 1 and 2 are immediate. $F_1$ can not be derived by Rule (b)
because, by Condition 1, $F_1$ does not contain any $\sqcup$. Nor
can it be derived by Rule (a) unless $n = 1$, for otherwise either
$F_1$ would have to contain a $\sqcap$ (which is not the case
according to Condition $1$), or the proof of $F$ would have
redundancies as $F_1$ would not really need any premises. Thus, if
$1 < n$, the only possibility for $F_1$ is to be derived from $F_2$
by Rule (c). For the induction step, assume $i < n$ and the above
conditions are satisfied for $F_i$. According to Condition 3, $F_i$
is derived by Rule (c) from $F_{i+1}$ . This obviously implies that
$F_{i+1}$ inherits Conditions 1 and 2 from $F_i$. And that
Condition 3 also holds for $F_{i+1}$ can be shown in the same way
as we did for $F_1$.

As the conclusion of Rule (a) (Condition 4), $F_n$ is
stable. Let $G$ be the elementarization of $F_n$. The stability of
$F_n$ means that $G$ is a tautology. Let $H$ be the result of replacing in $G$
every
occurrence of $\top$ and $\bot$ (except those inherited from $F$)
by a general atom, in such a way that different occurrences of
$\top$, $\bot$ are replaced by different atoms. In view of Condition 2 (applied to $F_n$),
we see that, on top of these new general atoms and the elementary atoms inherited from $F$, the
only additional atoms that $H$ contains are elementary atoms with exactly two --- one positive and
one negative --- occurrences.
Let $H'$ be the result of replacing in $H$ every occurrence of every such
elementary atom by a general atom not occurring in $H$, in such a way that
different elementary atoms are replaced by different general atoms.
Then it is not hard to see that $H'$ is binary and $F_n$ is an instance of $H'$.
With Condition 3 in mind, by induction, one can further see that the formulas $F_{n-1}$, $F_{n-2}$, $\ldots$ are
also instances of $H'$. Thus, $F$ is an instance of $H'$.
It remains to show that $H'$ is a tautology. But this is indeed so because
$H'$ results from the tautological $G$ by replacing positive occurrences of $\bot$
and replacing two --- one positive and one negative
--- occurrences of elementary atoms by general atoms.
It is known from classical logic that such replacements do not destroy truth and hence
tautologicity of formulas.

$(\Leftarrow)$ Assume $F$ is a {\bf CL6}-formula which is an
instance of a binary tautology $T$. In view of Lemma 8, we may assume that $T$
is normal and $F$ is an atomic-level instance of it. Let us call the general atoms
that only have one occurrence in $T$ {\bf single}, and the general atoms that
have two occurrences {\bf married}. Let $\sigma$ be the substitution with $\sigma(T)=F$.
Let $G$ be the formula resulting from $T$ by the following steps:
substituting each single atom $P$ by $\sigma(P)$; substituting each married atom
$Q$ by $\sigma(Q)$ if $\sigma(Q)$ is elementary; substituting each married atom
$R$ by a non-logical elementary atom $r$ not occurring in $F$ if $\sigma(R)$ is general.
It is
clear that then $F$ can be derived from $G$ by a series of
applications of Rule (c), with each such application replacing two
--- a positive and a negative --- occurrences of some non-logical elementary
atom $r$ by $\sigma(R)$. So, in order to show that {\bf CL2} proves
$F$, it would suffice to verify that $G$ is stable and hence it can
be derived from the empty set of premises by Rule (a). But $G$ is
indeed stable. To see this, consider the elementarization $G'$ of
$G$. It results from $T$ by replacing the only occurrence of each single general atom by
some elementary atom, and doing the same with both occurrences of each married general atom.
In other words, $G'$ is an instance of $T$. Hence, as $T$ is a tautology, so is $G'$,
meaning that $G$ is stable.
\end{proof}

\begin{theorem}
A formula is provable in {\bf CL6} iff it is valid in computability
logic.
\end{theorem}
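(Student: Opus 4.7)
The plan is to derive the theorem as a short chain of equivalences built on the preceding lemmas together with the already-cited soundness and completeness of {\bf CL2} with respect to the semantics of CoL. The only new ingredient I would need is a trivial bookkeeping step identifying formula-level and cirquent-level provability.

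First, I would identify any {\bf CL6}-formula $F$ with the single-oformula, single-ogroup cirquent $C_F=(\langle\{1\}\rangle,\langle F\rangle)$; by the definition of proofs in {\bf CL6}, $F$ is provable iff $C_F$ is. Next, since any substitution acts oformula-wise and preserves both pool-length and structure, $C_F$ can be an instance only of a cirquent of the form $(\langle\{1\}\rangle,\langle G\rangle)$, and such a cirquent is a binary tautology precisely when $G$ is a binary tautology in the formula-level sense used in Lemma 9. Hence ``$C_F$ is an instance of a binary tautology cirquent'' and ``$F$ is an instance of a binary tautology formula'' are the same condition.

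With that identification in hand, the theorem falls out immediately: by Lemma 7, $F$ is provable in {\bf CL6} iff $F$ is an instance of a binary tautology; by Lemma 9, this is equivalent to $F$ being provable in {\bf CL2}; and by the soundness and completeness of {\bf CL2} with respect to CoL (the result from Japaridze \cite{Key2} invoked in Section 2), the latter is equivalent to $F$ being valid in computability logic. Chaining these three equivalences gives the conclusion.

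Since all the genuine technical content has already been absorbed into Lemmas 7, 8, 9 (and into the cited {\bf CL2} result), I do not expect any real obstacle at this last step; the main thing to be careful about is simply the bridging observation that translates between the formula-level notion of instance appearing in Lemma 9 and the cirquent-level notion appearing in Lemma 7.
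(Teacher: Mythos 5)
Your proposal is correct and is essentially identical to the paper's own proof, which likewise derives the theorem as an immediate corollary of Lemma 7, Lemma 9, and the soundness/completeness of {\bf CL2} from \cite{Key2}. The only difference is that you explicitly spell out the formula-versus-cirquent bookkeeping step, which the paper leaves implicit.
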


\begin{proof}
This theorem is an immediate corollary of Lemma 9, Lemma 7 and the
known fact (proven in \cite{Key2}) that {\bf CL2} is sound and
complete with respect to the semantics of computability logic.
\end{proof}



\end{document}